\documentclass[a4paper,USenglish, 11pt]{article}

\usepackage{graphics}
\usepackage{latexsym,amssymb,amsmath}
\usepackage{multirow}
\usepackage{fullpage}

\usepackage{authblk}

\newcommand{\itemshort}{
\setlength{\itemsep}{0mm}
\setlength{\parskip}{0mm}
}

\newtheorem{algorithm}{Algorithm}
\newtheorem{definition}{Definition}
\newtheorem{lemma}{Lemma}
\newtheorem{theorem}{Theorem}
\newtheorem{corollary}{Corollary}
\newenvironment{proof}{\noindent {\bf Proof: }}{\hfill$\square$}
\newcommand{\qed}{\hfill$\square$}

\title{Self-Stabilizing Maximal Matching and Anonymous Networks}
\date{}

\author[1]{Johanne Cohen}
\author[2]{Jonas Lef\`evre}
\author[3]{Khaled Ma\^amra\thanks{This work was partially funding by DIGITEO, project RI2A\^{}2}}
\author[3]{Laurence Pilard}
\author[3]{Devan Sohier}
\affil[1]{LRI, CNRS, Universit\'e Paris Sud, Universit\'e Paris-Saclay, France. 
  \texttt{johanne.cohen@lri.fr}}
\affil[2]{LIX, CNRS, \'Ecole Polytechnique, Universit\'e Paris-Saclay, France. \texttt{jlefevre@lix.polytechnique.fr}}
\affil[3]{LI-PaRAD, Universit\'e Versailles-St. Quentin, Universit\'e Paris-Saclay, France. 
  \texttt{\{khaled.maamra, laurence.pilard, devan.sohier\}@uvsq.fr}}

\begin{document}
\maketitle

\begin{abstract}
 We propose a self-stabilizing algorithm for computing a maximal matching in an anonymous network. The complexity 
is  $O(n^3)$ moves with high probability, under the adversarial distributed daemon.  In this algorithm, each node can determine whether one of its neighbors points to it or to another node, leading to a contradiction with the anonymous assumption. To solve this problem, we provide under the classical link-register model, a self-stabilizing algorithm that gives a unique name to a link such that this name is shared by both extremities of the link.
\end{abstract}

\paragraph{Keywords:} Randomized algorithm, Self-stabilization, Maximal Matching, Anonymous network.

\section{Introduction}

Matching problems have received a lot of attention in different areas. Dynamic load balancing and job scheduling in parallel and distributed networks can be solved by algorithms using a matching set of communication links~\cite{BerenbrinkFM08,GhoshM96}. Moreover, the matching  problem has been recently  studied in the algorithmic game theory.  Indeed, the seminal problem relative to matching introduced by Knuth  is the stable marriage problem \cite{Knuth}. This problem can be modeled as a game with economic interactions such as two-sided markets \cite{AckermannGMRV11} or as a game with preference relations in a social network \cite{Hoefer13}.  But, all distributed algorithms proposed in the game theory domain use identities while we are interested in anonymous networks, \emph{i.e.} without identity.

In graph theory, a \emph{matching} $M$ in a graph is a set of edges without common vertices.  A matching is \emph{maximal} if no proper superset of $M$ is also a matching. A \emph{maximum} matching is a maximal matching with the highest cardinality among all possible maximal matchings. In this paper, we present a self-stabilizing algorithm for finding a maximal matching. Self-stabilizing algorithms \cite{Dijkstra74,Dolev00}, are distributed algorithms that recover after any transient failure without external intervention \emph{i.e.} starting from any arbitrary initial state, the system eventually converges to a correct behavior. The environment of self-stabilizing algorithms is modeled by the notion of \emph{daemon}. A daemon allows to capture the different behaviors of such algorithms accordingly to the execution environment. Two major types of daemons exist: the \emph{sequential} and the \emph{distributed} ones. The sequential daemon means that exactly one eligible process is scheduled for execution at a time. The distributed daemon means that any subset of eligible processes is scheduled for execution at a time. In an orthogonal way, a daemon can be \emph{fair} (meaning that every eligible process is eventually scheduled for execution) or \emph{adversarial} (meaning that the daemon only guarantees global progress, \emph{i.e.} at any time, at least one eligible process is scheduled for execution). 

In this paper we provide two self-stabilizing algorithms. The first one, called the \emph{matching algorithm}, is a randomized algorithm for finding a maximal matching in an anonymous network.  We show the algorithm stabilizes in expected $O(n^3)$ moves under the adversarial distributed daemon. This is the first algorithm solving this problem assuming at the same time an anonymous network and an adversarial distributed daemon. 
In this algorithm, nodes have pointers and a node can determine whether its neighbor points to it or to another node, leading to a contradiction with the anonymous assumption. Indeed, to know which node a neighbor is pointing to, the usual way is to use identities. To solve this problem, and this is the first paper giving a solution, we provide a self-stabilizing algorithm where we assume the classical link-register model, \emph{i.e.} a node communicates with a neighbor through a register associated to the link from the node to this neighbor. This algorithm, called the \emph{link-name algorithm}, gives names to communication links such that (i) both nodes at the extremity of a link $\ell$ know the name of $\ell$ and (ii) a node cannot have two distinct incident links with the same name. At the end of this paper, we will see how to rewrite the matching algorithm using output of the link-name algorithm, allowing then a process to know if one of its neighbors points to it without using identity.

\section{Related Works}

Several self-stabilizing algorithms have been proposed to compute maximal matching in unweighted or weighted general graphs.  For an unweighted graph, Hsu and Huang \cite{HsuH92} gave the first algorithm and proved a bound of $O(n^3)$ on the number of moves under a sequential adversarial daemon. The complexity analysis is completed by Hedetniemi et al. \cite{HedetniemiJS01} to $O(m)$ moves. Manne et al. \cite{ManneMPT11} presented a self-stabilizing algorithm for finding a $2/3$-approximation of a maximum matching.  The complexity of this algorithm  is proved to be $O(2^n)$ moves under a distributed adversarial daemon. 
In a weighted graph, Manne and Mjelde \cite{ManneM07} presented the first self-stabilizing algorithm for computing a weighted matching of a graph with an $1/2$-approximation to the optimal solution. They that established their algorithm stabilizes after at most exponential number of moves under any adversarial daemon (\emph{i.e.} sequential or distributed). Turau and Hauck \cite{TurauH11a} gave a modified version of the previous algorithm that stabilizes after $O(nm)$ moves under any adversarial daemon.

All algorithms presented above, but the Hsu and Huang \cite{HsuH92}, assume nodes have unique identity. The Hsu and Huang's algorithm is the first one working in an anonymous network. This algorithm operates under any sequential daemon (fair or adversarial) in order to achieve symmetry breaking. Indeed, Manne et al. \cite{ManneMPT09} proved   that in some anonymous networks there exists no deterministic self-stabilizing solution to the maximal matching problem under a synchronous daemon. This is a general result that holds under either the fair or the adversarial distributed daemon.  This also holds whatever the communication and atomicity model (the state model with guarded rule atomicity or the link-register model with read/write atomicity).     
Goddard et al. \cite{GoddardHJS08} proposed a generalized scheme that can convert any anonymous and deterministic algorithm that stabilizes under an adversarial sequential daemon into a randomized one that stabilizes under a distributed daemon, using only constant extra space and without identity. The expected slowdown is  bounded by $O(n^3)$ moves.  The composition of these two algorithms  can compute a maximal matching in $O(mn^3)$ moves in  an anonymous network  under a distributed daemon.

In anonymous networks, Gradinariu and Johnen \cite{GradinariuJ01} proposed a self-stabilizing probabilistic algorithm to give processes a local identity that is unique within distance $2$. They used this algorithm to run the Hsu and Huang's algorithm under an adversarial distributed daemon.  However, only a finite stabilization time was proved. Chattopadhyay et al. \cite{Chattopadhyay} improved this result by giving a maximal matching algorithm with $O(n)$ expected rounds  complexity under the   fair distributed daemon.  Note that a \emph{round} is a minimal sequence of moves where each node makes at least one move.  It is straightforward to show that this algorithm stabilizes in $\Omega(n^2)$ moves, but Chattopadhyay et al. do not give any upper bound on the move complexity. 

The previous algorithm as well as the maximal matching algorithm presented in this work both assume an anonymous network and a distributed daemon. However the first algorithm assumes the   fair daemon while the second one does not make any fairness assumption. Moreover, no move complexity is given for the first algorithm while we will prove the second one converges in expected  $O(n^3)$ moves.  

The following table compares features of the aforementioned algorithms and ours. Among all adversarial distributed daemons and with the anonymous assumption, our algorithm provides the best complexity.
\setlength{\tabcolsep}{3.25pt}
\begin{table}[h]
  \centering
  \begin{tabular}[h]{|c|c|c|c|c|c|}\hline
                    &\multirow{2}{*}{\cite{HsuH92,HedetniemiJS01} } &  Composition &\multirow{2}{*}{\cite{GradinariuJ01}}  & \multirow{2}{*}{\cite{Chattopadhyay}}   & \multirow{2}{*}{This paper} \\
                    &   & \cite{HsuH92,HedetniemiJS01}   with  \cite{GoddardHJS08}    &                                                          &                                                             & \\\hline
 \multirow{2}{*}{Daemon}             &  adversarial    & adversarial    & adversarial    & fair   & adversarial  \\        
                                                     &    sequential    & distributed    &   distributed  &   distributed &   distributed            \\  \hline
\multirow{2}{*}{Complexity}        & \multirow{2}{*}{ $O(m)$  moves} & $O(mn^3)$         & \multirow{2}{*}{finite}        &       $O(n^3)$              & $O(n^3)$ moves     \\
  &  &    expected moves      &        &      expected moves     &  with high probability\\\hline
 \end{tabular}
\end{table}

When dealing with matching under anonymous networks, we have to overcome the difficulty that a process has to know if one of its neighbors points to it. In Hsu and Huang's paper \cite{HsuH92}, this difficulty is not even mentioned and the assumption a node can know if one of its neighbors points to it is implicitly made. However, this difficulty is mentioned in the Goddard et al. paper \cite{GoddardHS06}, where authors present an anonymous self-stabilizing algorithm for finding a $1$-maximal matching in trees and rings. 
To overcome this difficulty, authors assume that every two adjacent nodes share a private register containing an incorruptible link's number. Note that this problem does not appear for the vertex cover problem \cite{TurauH11} or the independent set problem \cite{ShiGH04} even in anonymous networks (see \cite{GuellatiK10} for a survey). Indeed, in these kind of problems, we do not try to build a set of edges, but a set of nodes.  So, a node does not point to anybody and it simply has to know whether or not one of its neighbors belongs to the set. In this paper, we propose a self-stabilizing solution for this problem without assuming any incorruptible memory.

\section{Model}

A system consists of a set of processes where two adjacent processes can communicate with each other. The communication relation is typically represented by a graph G = (V, E) where $|V | = n$ and $|E| = m$. Each process corresponds to a node in $V$ and two processes $u$ and $v$ are adjacent if and only if $(u,v)\in E$. The set of neighbors of a process $u$ is denoted by $N(u)$ and is the set of all processes adjacent to $u$. We assume an \emph{anonymous} system meaning that processes have no identifiers. Thus, two different processes having the same number of neighbors are undistinguishable. 

We distinguish  two communication models : the \emph{state model} and the \emph{link-register model}. We are going to define the state model, then we will define the link-register model by pointing out the differences with the state model. 

In the \emph{state model}, each process maintains a set of \emph{local variables} that makes up the \emph{local state} of the process.  A process can read its local variables and the local variables of  its neighbors, but it can write only in its own local variables. A \emph{configuration} $C$ is a set of the local states of all processes in the system. 
Each process executes the same algorithm that consists of a set of \emph{rules}. Each rule is of the form of $<guard> \to <command>$. The \emph{guard} is a boolean function over the variables of both the process and its neighbors. The \emph{command} is a sequence of actions assigning new values to the local variables of the process. 

A rule is \emph{enabled} in a configuration $C$ if the guard is true in $C$. A process is \emph{activable} in a configuration $C$ if at least one of its rules is enabled. 
An \emph{execution} is an alternate sequence of configurations and transitions ${\cal E} = C_0,A_0,\ldots,C_i,A_i, \ldots$, such that $\forall i\in  \mathbb{N}^*$, 
$C_{i+1}$ is obtained by executing the command of at least one rule 
that is enabled in $C_i$ (a process that executes such a rule makes a \emph{move}). More precisely, $A_i$ is the non empty set of enabled rules in $C_i$ that has been executed to reach $C_{i+1}$ such that each process has at most one of its rules in $A_i$. 
An \emph{atomic operation} is such that no change can takes place during its run, we usually assume an atomic operation is instantaneous. In the case of the state model, such an operation corresponds to a rule.
We use the following notation : $C_i \to C_{i+1}$. 
An execution is \emph{maximal} if it is infinite, or it is finite   and no process is activable in the last configuration. All algorithm executions considered in this paper are assumed to be maximal. 

A \emph{daemon} is a predicate on the executions. We consider only the most powerful one: the \emph{distributed daemon} that allows all executions described in the previous paragraph. 

An algorithm is \emph{self-stabilizing} for a given specification, if there exists a sub-set $\cal L$ of the set of all configurations such that : every execution starting from a configuration of $\cal L$ verifies the specification  (\emph{correctness})  and  starting from any configuration, every execution reaches a configuration of $\cal L$   (\emph{convergence}). $\cal L$ is called the set of \emph{legitimate configurations}. A \emph{probabilistic self-stabilizing} algorithm ensures (deterministic) correctness, but only ensures probabilistic convergence. 

A configuration is \emph{stable} if no process is activable in the configuration. 
Both algorithms presented here, are \emph{silent}, meaning that once the algorithm stabilized, no process is activable. In other words, all executions of a silent algorithm are finite and end in a stable configuration.
Note the difference with a non silent self-stabilizing algorithm that has at least one infinite execution with a suffix only containing legitimate configurations, but not stable ones. 

In the \emph{link-register model} \cite{DIM93}, each process maintains a set of \emph{registers} associated to each of its communication links. A process can read and write in its own registers, but it can only read the registers of its neighbors that are associated to one of its links. More formally, if $u$ is a process, $\forall v \in N(u)$, $u$ maintains a set of registers $reg_{uv}$. These registers belong to $u$ and are associated to the link $(u,v)$, thus $u$ can read and write in these registers and $v$ can read them. The set of all registers of a process is the \emph{local state} of the process. In this paper, we use the {link-register model}. In this model, we assume local unique name on edges/ports, classically named the \emph{port numbering model} in message-passing systems, as a syntactic tool to designate a specific register as well as its counterpart on the other side of the communication link.

The configuration and rule definitions remain the same, but the difference is in the execution definition. In the link-register model, usually, the minimal atomicity is not the rule, but the action (remember that a command is a set of actions). This atomicity is called the \emph{read/write atomicity}.
We define two types of actions: (i) the \emph{internal action} that is an action over internal variables, such that \texttt{i++}, and (ii) the \emph{communication action} that is an action of reading or writing in a register. An \emph{atomic action} here is the execution of a finite sequence of internal actions ended by one communication action. Then, a transition in an execution is a non-empty set of atomic actions such that each process has at most one of its atomic actions in the transition.

\section{Maximal matching algorithm}\label{sec:matching}

The \emph{matching algorithm} $\mathcal{A}_1$ presented in this section uses the state model given in the previous section and is based on the maximal matching algorithm given by Manne et al.~\cite{ManneMPT09}.
In algorithm $\mathcal{A}_1$, every node $u$ has one local variable $\beta_{u}$ representing the node $u$ is matched with. If $u$ is not matched, then $\beta_{u}$ is equal to $\bot$.
Algorithm $\mathcal{A}_1$ ensures that a maximal matching is eventually built. Formally, we require the following specification $\mathcal{M}$ for $\mathcal{A}_1$:

\begin{definition}[Specification $\mathcal{M}$]
For a graph $G=(V,E)$, the set $M = \{ (u,v): \beta_{u} {=} v \land \beta_{v} {=} u \}$ is a maximal matching of $G$, \emph{i.e.} $\mathcal{M}$ = $\mathcal{M}_1 \land \mathcal{M}_2 \land \mathcal{M}_3$ holds:
\begin{itemize}
\item[]$\mathcal{M}_1$: ~ $\forall (u,v) \in M: u\in V \land v\in V \land (u,v)\in E$ \hfill (consistency) ~~~~~~~~~
\item[]$\mathcal{M}_2$: ~ $\forall u, v, w\in V: (u,v) \in M \land (u,w)\in M \Rightarrow v=w$  \hfill (matching condition)  ~~~~~~~~~
\item[]$\mathcal{M}_3$: ~ $\forall (u,v)\in E, \exists w\in V: (u,w)\in M \lor (v,w)\in M$  \hfill (maximality) ~~~~~~~~~\end{itemize}
\end{definition}

For the sake of simplicity, we assume that if any node $u$ having the value $v$ in its $\beta_{u}$ variable such that $v \not\in V$ or $v\not\in N(u)$ then $u$ understands this value as null ($\bot$).

Algorithm $\mathcal{A}_{1}$ has the three rules described in the following.
If a node $u$ points to null, while one of its neighbors points to $u$, then $u$ accepts the proposition, meaning $u$ points back to this neighbor (\emph{Marriage} rule).  
If a node $u$ points to one of its neighbors while this neighbor is pointing to a third node, then $u$ abandons, meaning $u$ resets its pointer to null (\emph{Abandonment} rule).
If a node $u$ points to null, while none of its neighbors points to $u$, then $u$ searches for a neighbor pointing to null. If such a neighbor $v$ exists, then $u$ points to it (\emph{Seduction} rule). This seduction can lead to either a marriage between $u$ and $v$, if $v$ chooses to point back to $u$ ($v$ will then execute the \emph{Marriage} rule), or to an abandonment if $v$ finally decides to get married to another node than $u$ ($u$ will then execute the \emph{Abandonment} rule). 

We define the probabilistic function $choose(X)$ that uniformly chooses an element in a finite set $X$.

\begin{algorithm}
The process $u$ makes a move according to one of the following rules: 
\begin{itemize}
\setlength{\itemsep}{1mm}
\setlength{\parskip}{1mm}
\item[]\textbf{(Marriage) }  $ (\beta_u=\bot) \land (\exists v \in N(u):\beta_{v}= u) \to \beta_u:= v$
\item[]\textbf{(Abandonment)} $ (\exists v \in N(u): \beta_u= v \land  \beta_v\neq u \land \beta_v\neq \bot) \to \beta_u:= \bot$
\item[]\textbf{(Seduction)} $ (\beta_u=\bot) \land ( \forall v\in N(u):\beta_v\neq u ) \land (\exists v \in N(u): \beta_{v}= \bot) 
	\to$\\ \hspace*{1.5cm} $\textrm{if }choose(\{0,1\})=1 \textrm{ then }\beta_u:=choose(\{v\in N(u): \beta_v=\bot \}) \textrm{ else } \beta_u:=\bot$
\end{itemize}
\end{algorithm}

The node that $u$ chooses to get married with in the marriage rule is not specified, since this choice has no bearing upon  the correctness nor the complexity of the algorithm. 

The proof of this algorithm is based on a potential function. To define this function, we first need to define notions of a \emph{Single node}, a \emph{good edge} and an \emph{almost good edge}.

Let $\mathcal{C}$ be the set of all possible configurations of the algorithm. Let $C\in\mathcal{C}$ be a configuration. 
A process pointing to null and having no neighbor pointing to it is called a \emph{Single node}. Then we define the predicate: 
\emph{Single(u)} $\equiv [\beta_u=\bot \land ( \forall v\in N(u):\beta_v\neq u )]$. 
Moreover, we define the set ${\cal S}(C)$ as the set of Single nodes in $C$. 
Note that if a Single  node is activable then at least one of its neighbors points to null.
We define the two following families of edges:
\begin{itemize}
\itemshort
\item a \emph{good edge} is an edge $(u,v)$ where $\beta_u=v$ and $\beta_v=u$
\item an \emph{almost good edge} is an edge $(u,v)$ where $\beta_u=v$ and $\beta_v=\bot$. Such a node $v$ is called an \emph{Indecisive} process.
\end{itemize}

\begin{figure}[h!] \label{fig_good_almost}
\centering
\scalebox{1}{\includegraphics{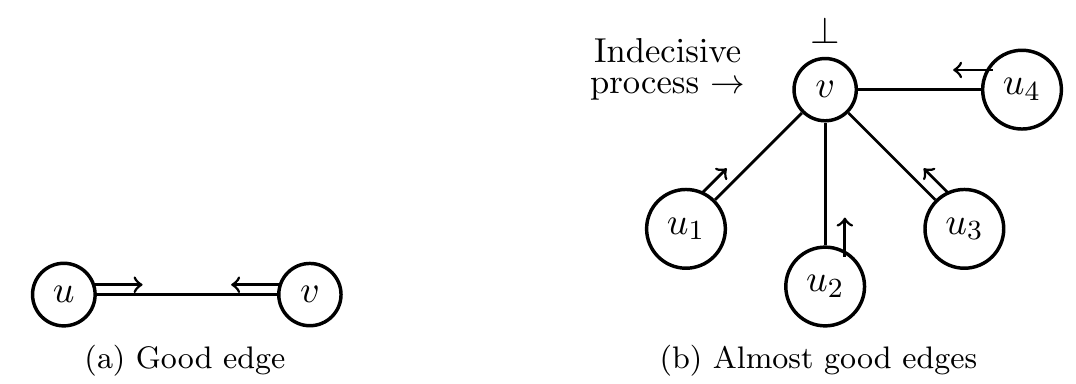}}
%
%
%
%
\caption{Examples of a good and  an almost good edges. Variable $\beta_{.}$ is represented by  the arrows.}

\end{figure}

A process in a good edge cannot ever be activable.
Since every process has only one pointer, two good edges cannot be adjacent.
Therefore, there cannot be more than $n/2$ good edges.
The activation of an Indecisive process necessarily produces  a good edge.
An Indecisive process can belong to many almost good edges. 
So there cannot be more than $n-1$ almost good edge.

We define the potential function $f:\mathcal{C}\to \mathbb{N\times N}$ by
$f(C)=(g,a)$ where $g$ is the number of good edges in $C$
and $a$ the number of almost good edges in $C$.
We recall the lexicographic order on $\mathbb{N\times N}$ :
$(a,b)\preceq (a',b')$ if one of the following conditions holds: \\
\hspace*{4cm} (1) $a < a'$ \hfill
(2) $a=a'$ and $b \leq b'$\hspace*{4cm}

\begin{lemma}\label{lem:1}
Let ${\cal E} = C_0, A_0, \ldots, C_i, A_i, C_{i+1}, \ldots$ be an execution of ${\cal A}_1$. We have: \\
\centerline{$\forall i \in \mathbb{N}, f(C_i) \preceq f(C_{i+1})$}
\end{lemma}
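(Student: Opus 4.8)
The plan is to prove the statement transition by transition: because $f$ is ordered lexicographically, it suffices to show for an arbitrary step $C_i \to C_{i+1}$ that either the number of good edges strictly increases, or it stays constant while the number of almost good edges does not decrease. Writing $f(C_i)=(g_i,a_i)$, I would split the argument into two parts: \textbf{(P1)} $g_{i+1}\ge g_i$ always, and \textbf{(P2)} if $g_{i+1}=g_i$ then $a_{i+1}\ge a_i$. Together with the definition of $\preceq$, these give $f(C_i)\preceq f(C_{i+1})$. The only genuine difficulty is that the daemon is distributed, so arbitrarily many processes may move in $A_i$ simultaneously; the heart of the proof is to show that the ``fragile'' endpoints of the edges counted by $f$ are never activable, so that concurrency cannot be exploited to destroy potential.

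For \textbf{(P1)}, I would invoke the already-stated fact that a process belonging to a good edge is never activable. Since the status of an edge $(u,v)$ depends only on $\beta_u$ and $\beta_v$, and neither endpoint of a good edge of $C_i$ moves during the step, every good edge of $C_i$ is still a good edge of $C_{i+1}$; moves of third processes cannot affect that edge. Hence $g_{i+1}\ge g_i$. If the inequality is strict we are done by case (1) of the lexicographic order, so all remaining work lies in the regime $g_{i+1}=g_i$.

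For \textbf{(P2)}, assume $g_{i+1}=g_i$, i.e. no new good edge is created during the step, and fix an almost good edge $(u,v)$ of $C_i$, so $\beta_u=v$ and $\beta_v=\bot$. I would show neither endpoint moves. The pointer endpoint $u$ has $\beta_u=v\neq\bot$, which disables \textbf{Marriage} and \textbf{Seduction}; and \textbf{Abandonment} at $u$ would require its target $v$ to satisfy $\beta_v\neq\bot$, contradicting $\beta_v=\bot$, so $u$ is not activable. The Indecisive endpoint $v$ has $\beta_v=\bot$ and the neighbor $u$ pointing to it, which disables \textbf{Seduction} (a neighbor points to $v$) and \textbf{Abandonment} ($\beta_v=\bot$), leaving only \textbf{Marriage} enabled. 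But a \textbf{Marriage} move by $v$ sets $\beta_v$ to some neighbor $x$ with $\beta_x=v$; each such $x$ is itself the pointer endpoint of an almost good edge $(x,v)$ and so, by the same reasoning as for $u$, does not move, whence $\beta_x=v$ persists and $(v,x)$ becomes a new good edge. That would force $g_{i+1}>g_i$, contradicting our assumption. Hence $v$ does not move either, $\beta_u=v$ and $\beta_v=\bot$ are preserved, and $(u,v)$ is still almost good in $C_{i+1}$. As this holds for every almost good edge of $C_i$, we obtain $a_{i+1}\ge a_i$, completing \textbf{(P2)}.

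I expect the main obstacle to be the bookkeeping around concurrency together with the fact that a single Indecisive node may be pointed at by several neighbors: one \textbf{Marriage} can simultaneously promote one almost good edge to a good edge while destroying several others that share that Indecisive endpoint. This is precisely why the argument is organized around the lexicographic order: such a \textbf{Marriage} always raises $g$, so the collapse of $a$ is permitted exactly when $g$ strictly increases and never in the critical regime $g_{i+1}=g_i$. The two ``freezing'' observations — that good-edge endpoints and almost-good-edge pointer endpoints are never activable — are what keep the per-edge analysis valid despite the distributed daemon.
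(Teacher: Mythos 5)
Your proposal is correct and follows essentially the same route as the paper's proof: the key observations in both are that endpoints of good edges and the pointer endpoints of almost good edges are never activable, and that the Indecisive endpoint can only execute a \textbf{Marriage}, which necessarily creates a new good edge and thus raises the dominant coordinate of $f$. Your contrapositive organization around the case $g_{i+1}=g_i$ is just a more detailed write-up of the paper's argument, not a different one.
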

This establishes that $f$  is a  potential function.

\begin{lemma} \label{lem:proba}
Let ${\cal E} = C_0, A_0, \ldots, C_i, A_i, C_{i+1}, \ldots$ be an execution of ${\cal A}_1$.  For all $ i \geq 0 $,    if $A_i$ contains  a move of a node that is Single in $C_i$, then $f(C_i) \prec f(C_{i+1})$  with probability greater than $\frac{1}{4}$. More formally:
$\forall i \in \mathbb{N}, Pr[f(C_i)\prec f(C_{i+1})| \exists u\in {\cal S}(C_i) : u \textrm{ is activated in }C_i]\geq \frac{1}{4}$
\end{lemma}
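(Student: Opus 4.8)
The plan is to fix one node $u$ that is Single in $C_i$ and activated in $A_i$ (the hypothesis guarantees at least one such node), and to lower-bound the probability that this node's move raises $f$, working conditionally on $C_i$ and on the daemon's choice $A_i$ so that the bound is uniform over the adversary. Since for a node with $\beta_u=\bot$ and no in-neighbor the three guards are pairwise exclusive, a Single activated node can only apply the \textbf{Seduction} rule; moreover, as recalled in the text, its set $P=\{v\in N(u):\beta_v=\bot\}$ of candidate partners is nonempty. First I would record two \emph{freezing facts}. (i) Whenever a node applies \textbf{Marriage}, pointing to a neighbor $z$ that already points back to it, that neighbor $z$ is not activable (none of the three guards holds for $z$), hence stays put and a fresh good edge at $z$ appears; since good edges are never destroyed, any \textbf{Marriage} in $A_i$ strictly increases the number $g$ of good edges. (ii) If $g$ does not increase during the transition, then no almost good edge of $C_i$ is destroyed: in such an edge $(x,y)$ with $\beta_x=y$ and $\beta_y=\bot$, the endpoint $x$ is not activable, while $y$ could change only through \textbf{Marriage}, which by (i) would raise $g$, a contradiction.

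Next I would define the \emph{winning event} $W$: $u$ draws $choose(\{0,1\})=1$ and the partner $v$ it then selects either applies \textbf{Marriage} or keeps $\beta_v=\bot$ after the transition. I claim $W$ forces $f(C_i)\prec f(C_{i+1})$. Indeed, if $g$ increases anywhere (in particular if $v$ marries, by fact (i)), the lexicographic order already yields a strict increase. Otherwise $g$ is unchanged; then by fact (ii) no almost good edge is lost, while the edge $(u,v)$ — which was not almost good in $C_i$, since both endpoints pointed to $\bot$ — becomes almost good in $C_{i+1}$ because $\beta_u=v$ and $\beta_v=\bot$. Hence $a$ strictly increases and, with $g$ constant, so does $f$ (this is consistent with the monotonicity of Lemma~\ref{lem:1}).

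Finally I would bound $\Pr[W]$. Conditioning on $C_i$ and $A_i$, the factor $\Pr[choose(\{0,1\})=1]=\tfrac12$ is independent of everything else. Given that $u$ seduces, it selects $v$ uniformly in $P$, independently of each neighbor's own coin. For a fixed candidate $w\in P$ I would check that $\Pr[w\text{ marries or keeps }\beta_w=\bot]\ge\tfrac12$: if $w\notin A_i$ or $w$ applies \textbf{Marriage} this probability is $1$, whereas if $w$ applies \textbf{Seduction} the outcome $\beta_w=\bot$ occurs exactly when its own coin is $0$, i.e.\ with probability $\tfrac12$ (\textbf{Abandonment} is impossible since $\beta_w=\bot$). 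Averaging uniformly over $w\in P$ and using independence of the coins gives $\Pr[W\mid u\text{ seduces}]\ge\tfrac12$, hence $\Pr[W]\ge\tfrac14$, which proves the lemma.

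The main obstacle, and the reason the two freezing facts are indispensable, is the distributed daemon: arbitrarily many nodes — possibly a whole chain of seductions emanating from $v$ — move in the same transition, so a priori the almost good edge that $u$ tries to create could be redirected or cancelled by moves elsewhere. Facts (i)--(ii) localize the argument by showing that in the only regime where $a$ is decisive (namely $g$ flat) the relevant endpoints are frozen, so the freshly created edge is guaranteed to survive; this is also what makes the bound hold uniformly for every adversarial choice of $A_i$.
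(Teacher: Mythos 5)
Your proof is correct and takes essentially the same route as the paper's: fix a Single activated node $u$, split on whether the uniformly chosen partner $v$ must marry, is itself seducing, or is idle, and combine the $\tfrac12$ seduction coin with the per-partner success probability of at least $\tfrac12$ to obtain the bound $\tfrac14$. Your two ``freezing facts'' spell out a step the paper leaves implicit --- that under the distributed daemon the creation of a good or almost good edge genuinely forces $f(C_i)\prec f(C_{i+1})$ in the lexicographic order --- which is a useful clarification but not a different argument.
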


This means that if the daemon activates a process of ${\cal S}(C)$ then with probability at least $1/4$, 
the potential function $f$ strictly increases.

\begin{lemma} \label{lem:moves}
In any execution containing $n+1$ moves, at least one Indecisive or Single node is activated. 
\end{lemma}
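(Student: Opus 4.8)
The plan is to argue by contraposition: I will show that any execution in which no Single and no Indecisive node is ever activated can contain at most $n$ moves. The lemma then follows at once, since an execution of $n+1$ moves cannot be of this restricted form.

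The first step is to classify the three rules according to the type of node that can fire them. Reading off the guards, the \emph{Marriage} rule is enabled at $u$ exactly when $\beta_u=\bot$ and some neighbor points to $u$, i.e.\ exactly when $u$ is Indecisive; the \emph{Seduction} rule is enabled at $u$ exactly when $\beta_u=\bot$, no neighbor points to $u$, and some neighbor points to $\bot$, so its guard entails $Single(u)$; and the \emph{Abandonment} rule requires $\beta_u \neq \bot$, so it can be executed only by a node that is neither Single nor Indecisive (both of those statuses force $\beta_u=\bot$). Consequently, in an execution where no Single or Indecisive node is activated, every single move is an Abandonment move.

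The second step is the key observation that in such an execution each node performs \emph{at most one} Abandonment move. Indeed, in the state model the only writer of $\beta_u$ is $u$ itself; an Abandonment move sets $\beta_u:=\bot$; and the only two rules that can later restore $\beta_u$ to a non-null value are Marriage and Seduction, both of which are excluded by hypothesis. Hence, once $u$ has abandoned, its pointer remains equal to $\bot$ and $u$ can never again satisfy the guard $\beta_u\neq\bot$ of Abandonment. Summing this bound over the $n$ nodes caps the total number of moves at $n$, which completes the contrapositive.

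The whole argument is a monotonicity/accounting argument and is not technically hard; the only point needing care is that the daemon is distributed, so several processes may move in the same transition. This is harmless here, because each process contributes at most one rule per transition and writes only its own pointer, so the per-node ``at most one Abandonment'' count is unaffected by concurrency. The real content to get right is the rule-to-status correspondence of the first step, on which the entire count rests.
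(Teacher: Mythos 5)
Your proof is correct and follows essentially the same route as the paper's: observing that Marriage and Seduction can only be executed by Indecisive and Single nodes respectively, so only Abandonment moves remain, and each node can execute Abandonment at most once, giving the bound of $n$. Your justification of the ``at most once per node'' step (via $\beta_u$ remaining $\bot$ because the only rules that could rewrite it are excluded) is in fact slightly more careful than the paper's, which asserts that an abandoning node becomes Single when it could also become Indecisive.
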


This establishes a upper bound of the number of moves between two  activations of  Indecisive or Single nodes.  From the combination of Lemmas  \ref{lem:1},~\ref{lem:proba} and~\ref{lem:moves}, we obtain Theorem~\ref{algoMatching}.

\begin{theorem}\label{algoMatching}
Under the adversarial distributed daemon and with the guarded-rule atomicity, the matching algorithm ${\cal A}_1$ is self-stabilizing and silent for the specification $\mathcal{M}$ and it reaches a stable configuration in $O(n^3)$ expected moves.
\end{theorem}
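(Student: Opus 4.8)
The plan is to prove the three assertions of the theorem in turn: \emph{correctness} (every stable configuration satisfies $\mathcal{M}$), \emph{convergence together with the $O(n^3)$ move bound}, and \emph{silence}. Since the algorithm is silent, the set of legitimate configurations coincides with the stable ones, so it suffices to analyze stable configurations for correctness and to bound the expected number of moves to reach one. A finite expected number of moves also forces almost-sure termination in a stable configuration, which is precisely silence (and convergence) for a probabilistic self-stabilizing algorithm.

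For correctness I would argue by a case analysis on an arbitrary stable configuration $C$, using that no rule is enabled. First, if $\beta_u = v \neq \bot$, then $(u,v)$ must be a good edge: the Abandonment rule being disabled at $u$ forces $\beta_v \in \{u,\bot\}$, and $\beta_v = \bot$ is impossible, since it would make $v$ an Indecisive node with a neighbor ($u$) pointing to it, enabling Marriage at $v$. Second, if $\beta_u = \bot$, then no neighbor points to $u$ (Marriage disabled), so $u$ is Single; the Seduction rule being disabled then forces every neighbor of $u$ to have a non-$\bot$ pointer, hence (by the first point) to lie in a good edge. Thus in $C$ every node is either matched in a good edge or Single with all its neighbors matched. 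This yields $\mathcal{M}_1$ and $\mathcal{M}_2$ immediately (a single pointer per node makes $M$ a consistent matching), and $\mathcal{M}_3$ follows because for any edge $(u,v)$, if $u$ is unmatched it is Single, which forces $v$ to be matched; hence $M$ is a maximal matching.

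For the move bound I would combine the three lemmas. Since $0 \le g \le \lfloor n/2\rfloor$ and $0 \le a \le n-1$, the potential $f$ takes at most $K = O(n^2)$ distinct values, and by Lemma~\ref{lem:1} it is non-decreasing, so it strictly increases at most $K$ times in any execution. Call a transition \emph{productive} when $f$ strictly increases, and a move \emph{good} when it activates an Indecisive or a Single node. A transition containing a good move is productive with probability at least $1/4$: Lemma~\ref{lem:proba} gives this for Single nodes, whereas an Indecisive activation deterministically creates a new good edge (it can only fire Marriage, and neither the targeted neighbor nor any pre-existing good edge is activable, so $g$ strictly increases). By Lemma~\ref{lem:moves}, every window of $n+1$ consecutive moves contains a good move. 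Hence, starting right after any productive transition, within each block of at most $n+1$ moves a good move occurs whose transition is productive with probability at least $1/4$, so a geometric argument bounds the expected number of moves until the next strict increase by $O(n)$. Summing over the at most $K=O(n^2)$ strict increases bounds the expected total number $T$ of moves by $O(n^2)\cdot O(n)=O(n^3)$, and finiteness of this expectation gives almost-sure termination, establishing convergence and silence.

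The delicate points I would treat most carefully are both in the last paragraph, because the distributed daemon lets several nodes move in one transition. First, I must justify that an Indecisive activation strictly increases $f$ even under simultaneous moves: the chosen neighbor $w$ satisfies $\beta_w = v$ while $\beta_v = \bot$ (as $v$ is Indecisive), so $w$'s Abandonment guard is false and $w$ is not activated, while good-edge endpoints are never activable, so no existing good edge is destroyed; thus $g$ genuinely increases. Second, I would formalize the expected-gap estimate with an optional-stopping / Wald-type argument rather than a hand-wave: viewing successive good transitions as trials of success probability at least $1/4$ whose total number of successes cannot exceed $K$, one obtains an expected $O(n^2)$ good transitions, each separated by at most $n+1$ moves, again yielding the $O(n^3)$ bound.
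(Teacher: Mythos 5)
Your proposal is correct and follows essentially the same route as the paper: correctness by case analysis on a stable configuration (Abandonment/Marriage disabled give $\mathcal{M}_1,\mathcal{M}_2$; Seduction disabled gives $\mathcal{M}_3$), and the $O(n^3)$ bound by combining Lemmas~\ref{lem:1}, \ref{lem:proba} and~\ref{lem:moves} so that each of the $O(n^2)$ strict increases of $f$ costs an expected $O(n)$ moves. Your two ``delicate points'' (the deterministic increase under simultaneous activations when an Indecisive node fires Marriage, and the Wald-type justification of the geometric gap estimate) are handled more tersely in the paper but add welcome rigor without changing the argument.
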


We just give an expected upper bound on the convergence time. In the following, we give an upper bound that hold with high probability.

\begin{theorem}\label{th:main}
Let $\varepsilon'>0$, and take $k \geq \text{max}\{4 (n+1)^3, -32 (n+1) \ln\,\varepsilon'\}$. Then, after $k$ moves, the algorithm has converged with probability greater  than $1-\varepsilon'$.
\end{theorem}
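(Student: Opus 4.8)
The plan is to convert the three lemmas into a window/Chernoff argument: cut the execution into consecutive \emph{windows} of $n+1$ moves, show each window has probability at least $1/4$ of strictly increasing the potential $f$, and observe that $f$ can increase only $O(n^2)$ times; then $\Theta(n^3)$ moves give, with high probability, more increases than are possible, forcing convergence.

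First I would bound the number of strict increases of $f$. By Lemma~\ref{lem:1} the value $f(C_i)=(g,a)$ is non-decreasing along the execution, and it lives in the finite lexicographic poset with $g\le\lfloor n/2\rfloor$ and $a\le n-1$. Hence the number of indices $i$ with $f(C_i)\prec f(C_{i+1})$ is at most $(\lfloor n/2\rfloor+1)\,n-1\le (n+1)^2/2$; call this bound $T$. Next I would extract a per-window success probability. By Lemma~\ref{lem:moves} every window activates an Indecisive or Single node; consider the first such activation, occurring at some step $i^\star$. If the activated node is Single, Lemma~\ref{lem:proba} yields $\Pr[f(C_{i^\star})\prec f(C_{i^\star+1})\mid \text{history up to }C_{i^\star}]\ge \tfrac14$; if it is Indecisive, its only enabled rule is \emph{Marriage}, which deterministically creates a good edge, so $f$ increases with probability $1$. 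In both cases, conditionally on the past, $f$ strictly increases inside the window with probability at least $\tfrac14$; I call such a window \emph{good}.

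The core step is a stochastic domination. Let $G_j$ be the indicator that window $j$ is good, filling in virtual independent $\mathrm{Bernoulli}(\tfrac14)$ variables for windows that would occur after the algorithm has already stabilized. Then $(G_j)$ has conditional success probability at least $\tfrac14$ at every step, so $\sum_{i\le j}G_i$ stochastically dominates a $\mathrm{Bin}(j,\tfrac14)$ variable. Since each good window contributes at least one strict increase of $f$, the number of good windows among the first $j$ never exceeds $T$; thus the event that the algorithm has not converged after $j$ windows is contained in $\{\sum_{i\le j}G_i\le T\}$, giving
\[
\Pr[\text{not converged after } j \text{ windows}]\ \le\ \Pr[\mathrm{Bin}(j,\tfrac14)\le T].
\]
I would then calibrate constants with the Chernoff lower tail. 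Writing $j=\lfloor k/(n+1)\rfloor$, the hypothesis $k\ge 4(n+1)^3$ gives $j\ge 4(n+1)^2\ge 8T$, so with $\mu=\mathbb{E}[\mathrm{Bin}(j,\tfrac14)]=j/4$ we have $T\le\mu/2$ and $\Pr[\mathrm{Bin}(j,\tfrac14)\le\mu/2]\le e^{-\mu/8}=e^{-j/32}$; the hypothesis $k\ge -32(n+1)\ln\varepsilon'$ gives $j\ge -32\ln\varepsilon'$, hence $e^{-j/32}\le\varepsilon'$. Converting $j$ windows back to $k=j(n+1)$ moves then yields the claimed bound.

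The main obstacle will be making the domination fully rigorous. The trials are neither independent nor of fixed probability: the $\tfrac14$ is a bound on a conditional probability given the entire history, it applies only while the execution is still running, and within a window several Indecisive/Single nodes may be activated so that the relevant stopping time $i^\star$ is itself random. The virtual-success device handles the random number of windows, but justifying $\sum_{i\le j}G_i\succeq \mathrm{Bin}(j,\tfrac14)$ requires a careful coupling or a supermartingale argument on $\prod e^{-\lambda G_i}$; once that is in place, the remaining estimates are routine bookkeeping.
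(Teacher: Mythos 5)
Your proposal is correct and follows essentially the same route as the paper: cut the execution into blocks of $n+1$ moves via Lemma~\ref{lem:moves}, extract a success probability of $1/4$ per block from Lemma~\ref{lem:proba}, bound the number of possible increases of $f$ by $O(n^2)$, compare to a $\mathrm{Bin}(\lfloor k/(n+1)\rfloor,\tfrac14)$ tail, and calibrate the constants to the stated $k$. The only differences are cosmetic --- you use the multiplicative Chernoff bound where the paper uses Hoeffding's inequality, and you are in fact more explicit than the paper about the stochastic-domination step needed to justify the comparison with independent Bernoulli trials.
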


\begin{proof}
First, we recall Hoeffding's inequality applied to the identically independent distributed Bernoulli random variables $X_1$, $X_2$, \dots $X_\ell$ which take value $1$ with success probability $1/4$.  Let $S_\ell$ be the random variable such that $S_\ell=\sum_{i=1}^{\ell} X_i$ (corresponding to the number of success during $\ell$ trials). From  Hoeffding's inequality, we have, for some $\varepsilon >0$,
$$ Pr[ S_\ell \leq (1/4-\varepsilon) \ell]  \leq \exp({-2 \varepsilon^2\ell}) $$

Second, we prove the probability  not to reach a stable configuration in at most 
$k$ moves.
Let $C_k$ be the event \emph{``the algorithm has converged in $k$ moves''}. 
Let $F_k$ be a random variable of the number of times the potential function $f$ has increased during the first $k$ moves of the execution. 
Note that $F_k\geq \frac{n^2}2$ means that the  algorithm has converged in at most $k$ moves. We focus on computing an upper bound of the probability of $\neg C_k$. Since $f$ starts from $(0,0)$ in the worst case and is incremented with probability $\frac14$ every $n+1$ moves:
\begin{eqnarray}\label{eq:12}
Pr[\neg C_k] & \leq & Pr \left (F_k  < \frac{n^2}2 \right )  
\end{eqnarray}

From Lemma~\ref{lem:moves}, in any execution containing $n+1$ moves, at least one Indecisive or Single node is activated.  So the number of times that one Indecisive or Single node is activated is at least $\lfloor \frac{k}{n+1}\rfloor =\ell$.
From Lemma \ref{lem:proba}, when a Single or an Indecisive node is actived, $f$ strictly increases with probability greater than $1/4$.   So, this activation can be viewed as a Bernoulli distribution which takes value $1$ with success probability at most $1/4$.


Let $Y_{i}$ be a random variable of the number of times the potential function $f$ has increased after $i$ activations of Single or Indecisive nodes. So Equation~\eqref{eq:12} can be rewritten as:
\begin{eqnarray}
Pr[\neg C_k] & \leq & Pr \left ( Y_\ell  < \frac{n^2}2 \right)  \text{ with  }   \ell=\left \lfloor \frac{k}{n+1}\right \rfloor \label{eq:proof:1}\\
		&\leq & \sum_{i=0}^{\frac{n^2}2-1} {\ell \choose i} \left(\frac14\right)^i\left(\frac34\right)^{\ell-i}
\end{eqnarray}

This value is the probability that less than $\frac{n^2}2$ independent Bernouilli variables with parameter $\frac14$ yield a positive result in $\ell$ trials. Equation \eqref{eq:proof:1} can be rewritten as:
\begin{eqnarray}
Pr[\neg C_k] &\leq & Pr \left ( S_\ell < \frac{n^2}2 \right) \label{eq:proof:2}
\end{eqnarray}
\noindent Thus we can apply the  Hoeffding's inequality, with $\varepsilon = \left (\frac14 - \frac{n^2-2}{2 \ell} \right )$:
\begin{eqnarray}
Pr \left ( S_\ell \leq \frac{n^2}2-1 \right) 
	& \leq & \exp\left(-2\left(\frac14 - \frac{n^2-2}{2 \ell}\right)^2\ell\right) \label{eq:proof:3}
\end{eqnarray}
\begin{eqnarray}
\text{Let }A = 2\left(\frac14 - \frac{n^2-2}{2 \ell}\right)^2\ell\text{, so we can rewrite Eq. \eqref{eq:proof:2}  and \eqref{eq:proof:3} as: } 
Pr[\neg C_k]  \leq \exp (-A)\label{eq:proof:4} 
\end{eqnarray}

Thus, a sufficient condition for the algorithm to have converged after $k$ moves with probability greater than $1-\varepsilon'$ is that:
$Pr[C_k] \geq 1- \varepsilon'  ~\Leftrightarrow~ Pr[\neg C_k] \leq \varepsilon'$
\begin{eqnarray}
\text{We have: }
\exp\left(-A\right) \leq \varepsilon'  ~ \Leftrightarrow ~ A\geq -ln\,\varepsilon'  \label{eq:proof:5}%
\end{eqnarray}%
\begin{eqnarray}\label{eq:proof:6}%
\text{According to Equations \eqref{eq:proof:4}  and \eqref{eq:proof:5}, we have: }
A\geq -ln\,\varepsilon'  \Rightarrow  Pr[\neg C_k] \leq   \varepsilon'
\end{eqnarray}

\noindent We choose the value $k$ such that $A\geq -ln\,\varepsilon' $.
We set $k=\alpha \frac{(n+1)^3}2$, with $\alpha \geq   \text{max}\left\{8,\frac{-64 ln\,\varepsilon'}{(n+1)^2}\right\}$.
\begin{eqnarray}\label{eq:proof:alpha}%
\text{ By computation (see Appendix), we have : } \alpha \geq   \text{max}\left\{8,\frac{-64 ln\,\varepsilon'}{(n+1)^2}\right\} \Rightarrow  Pr[\neg C_k] \leq   \varepsilon'
\end{eqnarray}

Thus, for $\alpha \geq   \text{max}\left\{8,\frac{-64 ln\,\varepsilon'}{(n+1)^2}\right\} $, we have convergence with probability greater than $1-\varepsilon'$.
Finally, since $k=\alpha \frac{(n+1)^3}2$, for $k \geq \text{max}\{4 (n+1)^3, -32 (n+1) \ln\,\varepsilon'\}$, the algorithm has converged after $k$ steps with probability greater than $1-\varepsilon'$.
\end{proof}

\begin{corollary} \label{cor:end}
For $n\geq 6$, the
algorithm converges after $O(n^3)$ moves with probability greater than
$1-\frac{1}{n}$.
\end{corollary}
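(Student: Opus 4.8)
The plan is to derive this corollary directly from Theorem~\ref{th:main} by the natural specialization $\varepsilon' = \tfrac1n$. Since the theorem guarantees convergence with probability greater than $1-\varepsilon'$ after $k$ moves whenever $k \geq \max\{4(n+1)^3,\, -32(n+1)\ln\varepsilon'\}$, setting $\varepsilon' = \tfrac1n$ immediately yields a probability bound of $1-\tfrac1n$, which is exactly the target. So the only work is to control the threshold $k$ and to check that it is $O(n^3)$.

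First I would substitute $\varepsilon' = \tfrac1n$ and compute $-\ln\varepsilon' = -\ln\tfrac1n = \ln n$, so that the threshold becomes $k \geq \max\{4(n+1)^3,\, 32(n+1)\ln n\}$. The heart of the argument is then to show that the first term dominates the second for $n \geq 6$, i.e. that
\[
32(n+1)\ln n \;\leq\; 4(n+1)^3,
\]
which after dividing by $4(n+1)$ reduces to the elementary inequality $8\ln n \leq (n+1)^2$. This is where the hypothesis $n \geq 6$ enters: at $n=6$ the right-hand side is $49$ while $8\ln 6 \approx 14.3$, and the gap only widens as $n$ grows since $(n+1)^2$ grows quadratically whereas $8\ln n$ grows logarithmically. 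I would verify the base case $n=6$ and note monotonicity of the difference $(n+1)^2 - 8\ln n$ to cover all $n \geq 6$.

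With the inequality in hand, the maximum in the threshold equals $4(n+1)^3$, so it suffices to take $k = 4(n+1)^3$. Then Theorem~\ref{th:main} applies and gives convergence after $k$ moves with probability greater than $1-\tfrac1n$. Finally I would observe that $4(n+1)^3 = O(n^3)$, which completes the claim that the algorithm converges after $O(n^3)$ moves with probability greater than $1-\tfrac1n$.

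I do not expect any genuine obstacle here, since the corollary is purely a specialization of the theorem; the only nontrivial point is the routine comparison $8\ln n \leq (n+1)^2$, whose validity for $n \geq 6$ is precisely what pins down the stated range of $n$. All the probabilistic content has already been established in Theorem~\ref{th:main}, so this step is a short deterministic verification rather than a new argument.
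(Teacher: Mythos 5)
Your proposal is correct and follows essentially the same route as the paper: both specialize Theorem~\ref{th:main} to $\varepsilon'$ equal to (or just below) $\frac1n$ and then verify that $4(n+1)^3$ dominates $32(n+1)\ln n$ for $n\geq 6$, so that $k=4(n+1)^3=O(n^3)$ suffices. The only cosmetic difference is that you reduce the comparison to the elementary inequality $8\ln n\leq(n+1)^2$ checked at $n=6$ plus monotonicity, whereas the paper argues via the derivative of $4(x+1)^3-32(x+1)\ln x$; both are routine verifications of the same fact.
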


\section{Handling the anonymous assumption}

When dealing with matching under anonymous networks, we have to overcome the difficulty that a process has to know if one of  its neighbors points to it. In the marriage rule for example, a node $u$ tests if there exists one of its neighbors $v$ such that  $\beta_{v} = u$. Thus $u$ has to know that the $u$'s value appearing in the equality is himself while $u$ has no identity.  This is a fundamental difficulty that is inherently associated to the specification problem and the matching problem cannot even be specified without additional assumptions.  In this paper, the solution consists by adopting the \textit{link-register} model. Moreover, in this model, we assume local unique names on edges/ports (classically named the \emph{port numbering model} in message-passing systems) as a syntactic tool to designate a specific register as well as its counterpart on the other side of the communication link. 

  In this section, we present a self-stabilizing algorithm that gives names to communication links such that (i) a \emph{link-name} is shared by each extremity of the link and (ii) a node cannot have two distinct incident links with the same \emph{link-name}. At the end of the section, we will see how this algorithm is used to overcome the anonymous difficulty previously presented. 

\subsection{The link-register model}

\textbf{The port numbering function:}  
A process $u \in V$ is linked to some other processes, its neighbors, through some edges. Since the network is anonymous, $u$ cannot distinguish these neighbors using identifiers. However, $u$ only knows a \textit{port} associated to each of them. We introduce then the $\mathcal{P}ort$ set being a set of ports, and the function $p$ associating a port to each incident edge of a node. This formalizes as:
{$p : V \times E \rightarrow \mathcal{P}ort \cup \{ ndef \}$}\\[-0.8em]

The function $p$ only makes sense when applied on a process and one of its incident edges. To simplify notations, when $p$ is applied in other cases it returns $ndef$. \\
\centerline{$p\left(u_1,(u_2,u_3 \right))= ndef \; \textrm{if and only if} \;  u_1 \notin \{u_2,u_3\}$}\\[-0.8em]

This leads to model the network as a graph $G=(V,E,p)$. For instance, in the graph of Figure \ref{fig_local_view}, we have: $p(u, (u,v_{1})) = a, p(v_{1}, (u,v_{1})) = d, p(u, (u,v_{2})) = b, p(v_{2}, (u,v_{2})) = c$. Note that $(u,v_1)$ and $(v_1,u)$ denote the same edge and so 
$p(u, (u,v_{1})) = p(u, (v_{1},u)) = a$

\begin{figure}[h!] 
\begin{center}
%
%
%
\scalebox{1}{\includegraphics{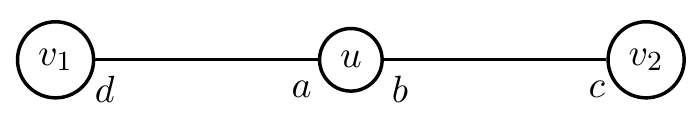}}
\caption{Local view of a node $u$.}
\label{fig_local_view}
\end{center}
\end{figure}

\noindent In this model, all ports for  a given process are distinct:  \\
\centerline{$\forall u{\in} V, \forall (e_1,e_2) {\in} E^2: (p(u,e_1) \neq ndef) {\wedge} (p(u,e_2) \neq ndef) {\wedge} (e_1 \neq e_2)  \Rightarrow (p(u,e_1) \neq p(u,e_2))$}\\
Thus, in Figure \ref{fig_local_view}, $a\neq b$ but $d$ can be equal to $a$, $b$ or $c$.\\

\noindent \textbf{The \emph{port} and \emph{proc} functions:} 
In order to access the necessary information to communicate through the registers, a node needs to know the set of ports of its own incident edges, let $\mathcal{P}(u)$ be this set:\\
\centerline{$\mathcal{P}(u)=\{a \in \mathcal{P}ort : \; \exists e=(u,v) \in E, \;p(u,e)=a\}$}\\
For instance, in Figure \ref{fig_local_view}, $\mathcal{P}_{u}=\{a,b\}$.

We define two functions: $proc$ and $port$. The function $proc$ associates to a node $u$ and one of its ports $a$, the node $v$ reached by $u$ through the port $a$. More formally, we have:
$$proc : V \times \mathcal{P} ort  \rightarrow V \cup \{ndef \}\textrm{ such that } 
proc(u,a)=v \Leftrightarrow 
\begin{cases}
 p(u,(u,v))=a & \mbox{if } a \in \mathcal{P}(u) \\
 v=ndef & \mbox{otherwise}
\end{cases}
$$
For instance, in Figure \ref{fig_local_view}, $proc(u,a)=v_{1}$ and $proc(u,b)=v_{2}$.

The function $port$ is defined accordingly to the function $proc$: given a node $u$ and one of its ports $a$, if $v$ is the node $u$ can reach through port $a$ ($proc(u,a)=v$), then $v$ reached $u$ through $port(u,a)$. More formally, we have:
$$port {:} V {\times} \mathcal{P}ort {\rightarrow} \mathcal{P}ort \cup \{ndef\} \; \textrm{such that }
 port(u,a){=}
   \begin{cases}
        p \left( proc(u,a),(u,proc(u,a)) \right) & \mbox{if } a {\in} \mathcal{P}(u) \\
        ndef  & \mbox{otherwise}
    \end{cases}
$$
For instance, in Figure \ref{fig_local_view}, $port(u,a)=d$, $port(u,b)=c$ and $port(u,c)=ndef$.

Note that the function $proc$ is not used to obtain a node \emph{value} or \emph{identity} but a node \emph{entity}. The function $proc$ is a tool of representation of existing links, allowing a node to denote the node at the other end of one of its communication links.  In the same way, the function $port$ is used to obtain the port \emph{entity} and not the port \emph{value}. 
As a result, we cannot compose or compare results returned by $proc$ or $port$ or do any arithmetic operations on them, this would violate the anonymous assumption.


\subsection{Link-name Algorithm}

The \emph{link-name algorithm} $\mathcal{A}_2$ presented in this section uses the link-register model given in a previous section, with read/write atomicity.

In algorithm $\mathcal{A}_1$, we wrote ``$\,\exists v \in N(u):\beta_v=u\,$'' to specify that a node $v$ is pointing to $u$. The anonymous assumption makes this test impossible ; the link-name algorithm will make it possible.

In algorithm $\mathcal{A}_2$, every node $u$ contains two registers per port $a$ (see Figure \ref{fig_registers}):  $img_{ua}$ and $link_{ua}$. Register $link_{ua}$ is the name $u$ gives to its port $a$. From $u$'s point of view, $link_{ua}$ is the name used by $u$ to designate the node $proc(u,a)$. Register $img_{ua}$ is the name used by $proc(u,a)$ to designate $u$. Once again, from $u$'s point of view, $link_{vb}$ is designated as $link_{proc(u,a)port(u,a)}$ (see \textbf{($R_a$)} rule below).

\begin{figure}[h!]
\begin{center}
\scalebox{1}{\includegraphics{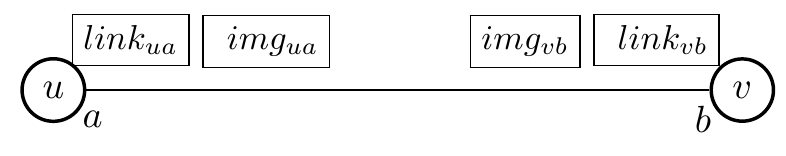}}
\caption{Four registers per edge.}
\label{fig_registers}
\end{center}
\end{figure}

The algorithm ensures that eventually  every two different processes $u$ and $v$ will agree on  names  they use to call each other. More formally, we require the following specification $\mathcal{LN}$ for $\mathcal{A}_2$:

\begin{definition}[Specification $\mathcal{LN}$]
For a graph $G=(V,E,p)$ we have, $\mathcal{LN} = \mathcal{LN}_1 \land \mathcal{LN}_2$ with:
\begin{itemize}
\item[] $\mathcal{LN}_1$: $\forall u \in V, \forall i \in \{ 1, \ldots, |\mathcal{P}(u)|\}, \exists a \in \mathcal{P}(u): \; link_{ua}=i$
\item[] $\mathcal{LN}_2$:
$\forall e=(u,v) \in E: \;  \left(p\left( u,e \right)=a\right) \wedge \left(p\left( v,e \right)=b\right) \Rightarrow (img_{ua}=link_{vb}) \wedge \left( img_{vb} = link_{ua} \right)$
\end{itemize}
\end{definition}

In algorithm $\mathcal{A}_2$, a process $u$ checks whether every port has a unique name taken between 1 and $|\mathcal{P}(u)|$. If not, $u$ renames them all (rule $R_0$). Per port $a$,  process $u$ checks whether the name used by $proc(u,a)$ to designate $u$ is equal to $img_{ua}$ (rule $R_a$).

\begin{algorithm} The link-name algorithm $\mathcal{A}_2$ for node $u$:
\begin{description}
\item[($R_0$)] $\neg (\forall i \in \{ 1, \ldots, |\mathcal{P}(u)| \}, \exists a \in \mathcal{P}(u): \; link_{ua}=i) \rightarrow \text{\emph{Rename all} }\; link_{ua}\; \mbox{\emph{from 1 to}}\; |\mathcal{P}(u)|$
\item[For every $a \in \mathcal{P}(u)$:]~
\begin{itemize}
\item[\textbf{($R_a$)}]  $img_{ua} \neq link_{proc(u,a)port(u,a)}\rightarrow img_{ua}:=link_{proc(u,a)port(u,a)}$
\end{itemize}
\end{description}
\end{algorithm}

Algorithm $\mathcal{A}_2$ satisfies the following theorem:

\begin{theorem}  \label{th:2} Under the adversarial distributed daemon and with the read/write atomicity, the link-name algorithm $\mathcal{A}_2$ is self-stabilizing and silent for the specification $\mathcal{LN}$, and it reaches a stable configuration in $\mathcal{O}(m)$ moves.
\end{theorem}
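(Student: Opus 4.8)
The plan is to exploit the fact that the two rules act on disjoint sets of registers and to treat them in sequence. Rule $R_0$ writes only the $link$ registers of a node, while each rule $R_a$ writes only the single register $img_{ua}$. The first step I would carry out is a monotonicity lemma for $R_0$: once $R_0$ is disabled at a node $u$, it stays disabled forever. The guard of $R_0$ at $u$ depends solely on the values $\{link_{ua} : a \in \mathcal{P}(u)\}$ together with the fixed quantity $|\mathcal{P}(u)|$, and executing $R_0$ turns these values into a permutation of $\{1, \dots, |\mathcal{P}(u)|\}$, which falsifies the guard. Since no other rule of any node ever writes these registers, the guard remains false. Hence $R_0$ fires at most once per node, and after all such firings the predicate $\mathcal{LN}_1$ holds and is stable. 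I would also observe that an isolated node has an empty port set, so the universal quantifier in the guard of $R_0$ is vacuously true and $R_0$ is disabled there; consequently $R_0$ fires only at nodes of degree at least one, of which there are at most $2m$, bounding the number of $R_0$ moves by $O(m)$.

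The second step handles the rules $R_a$. Fix a port $a$ of $u$, let $v = proc(u,a)$ and $b = port(u,a)$, so the target that $R_a$ copies into $img_{ua}$ is $link_{vb}$. By the monotonicity lemma applied at $v$, the value $link_{vb}$ changes at most once during the whole execution, namely when $R_0$ fires at $v$; thus it takes at most two distinct values. Once $link_{vb}$ has reached its final value, the guard $img_{ua} \neq link_{vb}$ can be falsified only by executing $R_a$ (only $u$ writes $img_{ua}$), after which it stays false because $link_{vb}$ is now frozen. Charging $O(1)$ writes of $img_{ua}$ per distinct value of its target, and summing over the $2m$ ports, bounds the $R_a$ moves by $O(m)$; together with the $O(m)$ bound on $R_0$ this gives the claimed $O(m)$ move complexity.

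For correctness and silence I would show that a configuration is stable if and only if it satisfies $\mathcal{LN}$. In a stable configuration every $R_0$ is disabled, which is exactly $\mathcal{LN}_1$; and every $R_a$ is disabled, i.e. $img_{wc} = link_{proc(w,c)port(w,c)}$ for every node $w$ and port $c$. Reading this for an edge $e=(u,v)$ with $p(u,e)=a$ and $p(v,e)=b$ at both endpoints, and using $proc(v,b)=u$ and $port(v,b)=a$, yields both conjuncts $img_{ua}=link_{vb}$ and $img_{vb}=link_{ua}$ of $\mathcal{LN}_2$. Conversely $\mathcal{LN}$ falsifies all guards, so any configuration satisfying it is stable; taking the legitimate set to be these stable configurations gives deterministic correctness, while the $O(m)$ move bound gives convergence from an arbitrary configuration and, since every maximal execution is finite and ends with no activable node, silence.

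The main obstacle I expect is the interaction with the read/write atomicity in the move count for $R_a$: since the read of $link_{vb}$ and the write of $img_{ua}$ are separate atomic actions, a value read before $R_0$ fires at $v$ may be written afterwards, producing one extra "stale" write that re-enables $R_a$. The remedy is precisely the monotonicity lemma, which guarantees $link_{vb}$ stabilizes once and for all; after the read of its final value is committed to $img_{ua}$ the guard is permanently false, so the stale read only inflates the per-port count by a constant and the $O(m)$ bound is unaffected.
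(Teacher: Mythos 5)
Your proposal is correct and follows essentially the same route as the paper: bound the $R_0$ moves by one firing per node, bound the $R_a$ moves by a constant per port using the observation that the target register $link_{proc(u,a)port(u,a)}$ changes at most once, and check that stable configurations satisfy $\mathcal{LN}_1$ and $\mathcal{LN}_2$ directly from the falsified guards. Your explicit treatment of the stale read under read/write atomicity and of the converse direction (that $\mathcal{LN}$ implies stability) is slightly more careful than the paper's, but it is the same argument.
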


\subsection{Why is the link-register model not sufficient?}

The link-register model allows to locally distinguish the links incident to a node.  Despite of this link naming, the impossibility result proved by Manne et al.  [19]  still holds. 
In other words, there exists no deterministic self-stabilizing algorithm to build the maximum matching under the synchronous daemon even with a link-register model. So the link-register model allows to overcome the anonymous difficulty that is a node cannot know if one of its neighbors points to itself. However it does not overcome the impossibility in an anonymous network to find some \emph{deterministic} solution for the maximal matching problem.  
In particular, the Manne \emph{et al.} algorithm \cite{ManneMPT09} does not solve the anonymous maximal matching problem even if we assume an underlying link-register model.


\subsection{Rewriting}

In this section, we give a systematic way to rewrite the matching algorithm $\mathcal{A}_1$ using registers of $\mathcal{A}_{2}$ in order to avoid $\mathcal{A}_{1}$'s instructions that violate the anonymous assumption. Algorithm $\mathcal{A}_{1}$ has two kinds of such instructions: the one that writes a non null value in a variable $\beta$ (\emph{e.g.} $\beta_{u} := v$) and the other that searches for a specific non null value in a variable $\beta$ (\emph{e.g.} $\beta_{u} \neq v$). 
For example we would like to see the Marriage rule 
$(\beta_u =\bot) \wedge (\exists v \in N(u): \beta_v=u) \to \beta_u := v$
rewritten as:
$(\beta_u = \bot) \wedge( \exists a \in \mathcal{P}(u): \beta_{proc(u,a)}=img_{ua}) \to \beta_u := link_{ua}$

We give above the generic rules that permit such a rewriting. Note that these generic rules are purely syntactical, \emph{i.e.} we replace some character sequences by some other. In the following, $u$ denotes the node executing the algorithm and $v$ another node ($u\neq v$).
\begin{itemize}
\setlength{\itemsep}{2pt}
\setlength{\parskip}{2pt}
\item The set of neighbor's identifiers $N(u)$ is rewritten as the set of ports $\mathcal{P}(u)$.


\item If $v$ appears 
\begin{enumerate}
\itemshort
\item  in a quantifier, then $u$  manipulates the port $a$ that links it to $v$. So the expression ``$\exists v \in N(u)$''  is replaced by  ``$\exists a \in \mathcal{P}(u)$'' and  the expression ``$\forall v \in N(u)$''  is replaced by ``$\forall a \in \mathcal{P}(u)$'' ;
\item as a subscript of a variable (as in $\beta_v$), then $v$ is replaced by $proc(u,a)$ since in this case $v$ indicates the\emph{ owner of the variable} ;
\item otherwise (as in $\beta_u=v$), $v$ is replaced by $link_{ua}$ since in this case $v$ indicates \emph{the node itself} and so the name used by $u$ to designate node $v$ is needed.\\
\emph{Applying the two previous rules (2. and 3.), we obtain: \\
``$\beta_u = v$'' is rewritten as ``$\beta_u = link_{ua}$'' and  $``\beta_v=\bot''$ is rewritten as ``$\beta_{proc(u,a)} = \bot$''}

\end{enumerate}

\item If $u$ (the node executing the algorithm) appears 
\begin{enumerate}
\itemshort
\item as a subscript of a variable (as in $\beta_u$) then no rewriting is needed,
\item otherwise (as in $\beta_v=u$), $u$ is replaced by $img_{ua}$ since in this case $u$ indicates \emph{the node itself} appearing in the variable of the $u$'s neighbor $v=proc(u,a)$.
\item[]  \hfill \emph{Applying the previous rule, we obtain: ``$\beta_v = u$'' is rewritten as ``$\beta_{proc(u,a)} = img_{ua}$'' ~ }

\end{enumerate}
\end{itemize}

\noindent We give above the algorithm $\mathcal{A}_1$ fully rewritten using these rules:

\begin{algorithm}
~
\begin{itemize}
\setlength{\itemsep}{1mm}
\setlength{\parskip}{1mm}
\item[]\textbf{(Marriage) }  $ (\beta_u=\bot) \land (\exists a \in \mathcal{P}(u):\beta_{proc(u,a)}=  img_{ua}) \to \beta_u:=  link_{ua}$

\item[]\textbf{(Abandonment)} $ (\exists a {\in} \mathcal{P}(u): \beta_u{=} link_{ua} \land  \beta_{proc(u,a)}{\neq}  img_{ua} \land \beta_{proc(u,a)}{\neq} \bot) \to \beta_u:= \bot$

\item[]\textbf{(Seduction)} $ (\beta_u{=}\bot) \land (\forall a {\in} \mathcal{P}(u):\beta_{proc(u,a)}\neq  img_{ua} ) \land (\exists a \in \mathcal{P}(u): \beta_{proc(u,a)}= \bot) 
	\to$\\ 
	\hspace*{2.6cm} $\textrm{if }choose(\{0,1\})=1 \textrm{ then }\beta_u:=choose(\{ a \in \mathcal{P}(u): \beta_{proc(u,a)}=\bot \})$\\
	\hspace*{2.6cm} $\textrm{else } \beta_u:=\bot$
\end{itemize}
\end{algorithm}

\noindent In the following, we will denote by  $\mathcal{RA}_1$, the algorithm $\mathcal{A}_1$ rewritten with the rules above. 

Having defined algorithms $\mathcal{RA}_1$ and $\mathcal{A}_2$, we would like to compose them to give a unified self-stabilizing algorithm.  However, this is not doable in a straightforward way. Indeed, the two algorithms use different communication and atomicity models: algorithm $\mathcal{RA}_1$ assumes the state model with the guarded rule atomicity, while $\mathcal{A}_2$ assumes the link-register model with the read/write atomicity.  For this composition, we keep both models. So $\mathcal{RA}_1$ and $\mathcal{A}_2$ are executed in the same execution, under these two different models. 

We cannot directly apply the composition result of Dolev et al. \cite{Dolev} since authors assume the same model for their composition. However, we can use similar arguments: 
\begin{enumerate}
\item $\mathcal{A}_2$ neither reads nor writes in variables of $\mathcal{RA}_1$ while $\mathcal{RA}_1$ only reads in registers of~$\mathcal{A}_2$. 
\item  $\mathcal{A}_2$ stabilizes independently of $\mathcal{RA}_1$. 
\end{enumerate}
Concerning $\mathcal{RA}_1$, $\mathcal{RA}_1$ has been proved under the state communication model while it  uses \emph{registers} from $\mathcal{A}_2$. However we can notice that $\mathcal{A}_2$ is silent thus the value of these registers will eventually not change. Furthermore, a node does not read registers of its neighbors but only its own registers. Thus they can be viewed as internal variables (since they are not used to communicate between neighbors). Thus $\mathcal{RA}_1$ only uses local and internal variables so the proof that has been done for $\mathcal{A}_1$ is still valid for $\mathcal{RA}_1$.

Thus once $\mathcal{A}_2$ is stabilized  and reaches a stable configuration, $\mathcal{RA}_1$ eventually stabilizes under the state model and the guarded rule atomicity.

 \section{Conclusion}
 
We presented a self-stabilizing algorithm for the construction of a maximal matching. This algorithm assumes the state model and runs in a anonymous network and under the adversarial distributed daemon. It is a probabilistic algorithm that converges in $O(n^3)$ moves with high probability. 

We then present the \emph{pointing impossibility} that is the impossibility for a node, in an anonymous network assuming the state model, to know whether or not one of its neighbors points to it.  We overcome this by using the link-register model. So, we first give a detailed formalization of this model. Second, we gave the \emph{link-name} algorithm, that allow two nodes sharing an edge to keep each other updated about the name they chose for the shared edge. We finally saw that in an anonymous network, assuming the link-register model 
the pointing impossibility result does not hold anymore. 
As a perspective, we would like to analyze the maximal matching algorithm under the link-register model with the read/write atomicity instead of the state model.


\newpage

\appendix
\section{Proofs of Section \ref{sec:matching}}

{\bf Proof of Lemma \ref{lem:1}:} 
A process that belongs to a good edge will never be activable, thus a good edge cannot be destroyed. 

The only way to reduce the number of almost good edges is to apply the marriage rule 
on the Indecisive process. Indeed, let $e = (u,v)$ be an almost good edge, where $v$ is the Indecisive process of $e$.
In edge $e$, the only activable process is $v$ and this node can only execute a marriage move.
Note that $v$ can belong to several almost good edges. Thus the execution of the marriage rule by $v$ destroys all the almost good edges incident to $v$ and creates one new good edge.
To conclude, in every case $f$ cannot decrease. \qed\\\\

{\bf Proof of Lemma \ref{lem:proba}:} 
Let $u$ be a Single node that is activated in $C_i$. Let $F(u)=\{v\in N(u): \beta_v=\bot \}$ and $d$ be its cardinality.
Since $u$ is activated in $C_i$, $F(u)$ is not an empty set and  $d\geq 1$.
 
\begin{figure}[h!] 
\begin{center}
%
%
\scalebox{1}{\includegraphics{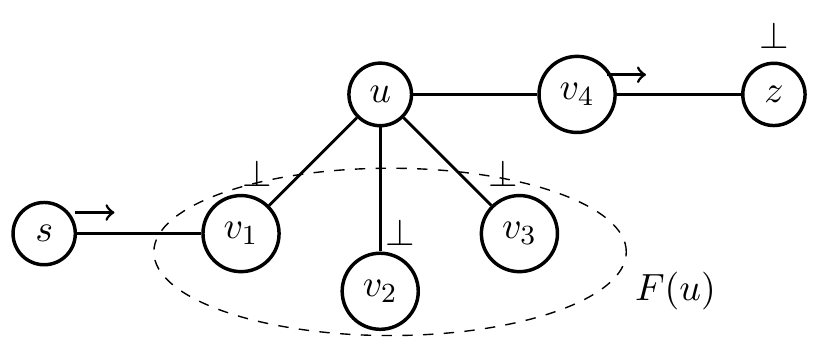}}
\end{center}
 \caption{Example of a configuration where $u$ is a Single   node and $F(u)=\{v_1,v_2,v_3\} $. } 
 \label{fig_proba_calcul} 
 \end{figure}

Let $v\in F(u)$. Note that $u$ seduces a neighbor $v$  with a probability $\frac{1}{2d}$. Indeed, process $u$  seduces one of its neighbors with probability $\frac{1}{2}$. Moreover, if $u$ decides to seduce (\emph{i.e.} to change its local variable $\beta_u$), then it  seduces $v$ with probability   $\frac{1}{d}$. Let us assume that $u$ seduces $v$, then there are  three cases (see Figure~\ref{fig_proba_calcul}):
\begin{description}
\setlength{\itemsep}{1mm}
\setlength{\parskip}{1mm}
\item [Case 1) $v\not\in {\cal S}(C)$ and $v$ is activated in $C_i$:]  ($v_1 \not\in {\cal S}(C)$ in Fig. \ref{fig_proba_calcul}) In this case, there is a process  $s$ such that $\beta_{s}=v$ 
and $v$ has to apply the marriage rule. This action creates a new good edge containing $v$.
\item [Case 2) $v\in {\cal S}(C)$ and $v$ is activated  in $C_i$:] ($v_2 \in {\cal S}(C)$ and $v_3 \in {\cal S}(C)$ in Fig. \ref{fig_proba_calcul}) In this case, process $v$ applies the seduction rule and still points to null with probability $\frac{1}{2}$.
Thus,  edge $(u,v)$ becomes an almost good edge with probability~$\frac{1}{2}$.
\item [Case 3) $v$ is not activated  in $C_i$:] In this case $v$ still points  to null with probability 1. Thus,    edge $(u,v)$ becomes an almost good edge.
\end{description}

\noindent Let $Pr_{uv}$ be the probability to create:
\begin{itemize}
\setlength{\parskip}{-2mm}
\item  a new almost good edge between $u$ and $v$ such that $v$ is the Indecisive process of edge $(u,v)$  (we will call this an \emph{almostGood-creation}),
\itemshort
\item  or a new good edge containing $v$  (we will call this a \emph{good-creation}),
\end{itemize}
during $C_i \to C_{i+1}$ given that $u$ is a Single node that is activated in $C_i$ and $v\in F(u)$ and $|F(u)| = d$. 
\\[-0.5em]

From the previous case study, we can deduce that
 in case 1, a \emph{good-creation} occurs with probability $\frac{1}{2d}\times 1$, 
 in case 2, an \emph{almostGood-creation} occurs with probability $\frac{1}{2d}\times \frac12$ and
 in case 3, an \emph{almostGood-creation} occurs with probability $\frac{1}{2d}\times 1$. 
Thus $Pr_{uv} \geq \min \{ \frac{1}{2d}, \frac{1}{4d}, \frac{1}{2d} \} \geq \frac{1}{4d}$.\\[-0.5em]


\noindent Moreover, we define $Pr_u$ as the probability for $u$ to create: 
\begin{itemize}
\setlength{\parskip}{-2mm}
\item a new almost good edge between $u$ and one of its neighbors $v$  such that $v$ is the Indecisive process of edge $(u,v)$
\itemshort
\item  or a new good edge containing one of its neighbor, 
\end{itemize}
during $C_i \to C_{i+1}$ given that $u$ is a Single node that is activated in $C_i$. \\

\noindent We have:
$Pr _u =\sum_{v\in F(u)}  Pr_{uv} \geq \frac{1}{4}$

\noindent Thus :
$Pr[f(C_i)\prec f(C_{i+1})| \exists u\in {\cal S}(C_i) \textrm{ activated in }C_i]\geq \frac{1}{4}$

\qed\\\\
%

{\bf Proof of Lemma  \ref{lem:moves}:}
We are going to exhibit an as long as possible execution $\mathcal{E}$ without activating any Indecisive or Single node. Note that the seduction rule can only be executed by Single nodes and that the marriage rule can only be executed by Indecisive nodes. Thus, the only possible rule in $\mathcal{E}$ is the Abandonment rule. If this rule is executed by some process $u$, then $u$ becomes Single and all Indecisive, respectively Single, neighbors of $u$ will remain so. 
Thus, we can apply this rule at most $n$ times. 
\qed\\\\

{\bf Proof of Theorem \ref{algoMatching}:}
First, we prove that a stable configuration satisfies the
specification $\mathcal{M}$.  In a stable configuration $C$, we can apply neither the
abandonment nor the marriage rules.  
Then for all node $u$ in $V$  we have one of the two following conditions:
$(\beta_u=\bot \land \forall v \in N(u): \beta_v \neq u)$ or $(\exists v \in N(u) : \beta_u = v \land \beta_v = u)$.
Note that in condition 2, $\beta_v$ cannot be $\bot$ otherwise condition 1 does not hold for $v$.
Thus the set $M = \{(u,v):\beta_u=v \land \beta_v=u \}$ is such that $(u,v)\in E$ and so $\mathcal{M}_1$ holds.  


Let us prove that in a stable coation, $M$ is a maximal
matching.  Since every process has only one pointer, two good edges
cannot be adjacent.  Thus $\mathcal{M}_2$ holds and $M$ is a
matching. Since we cannot apply the seduction rule, for every node $u$ such
that $\beta_u = \bot$,  every neighbor $v$ of $u$ is such that
$\beta_v \neq \bot$ and thus $v$ belongs to  an edge in $M$. Thus
$\mathcal{M}_3$ holds  and $M$ is maximal.




Now, let us prove that we reach a stable configuration in $O(n^3)$ expected moves.  Let $X$ be a random variable of the number of moves needed to increase function $f$.  According to Lemma \ref{lem:moves}, each sequence of $n+1$ moves, contains at least one move from an Indecisive or Single node. Let $m$ be such a move. If $m$ is a move of an Indecisive node, then $m$ is the execution of the Marriage rule, and thus $f$ increases by one during this move. If $m$ is a move of a Single node, then according to Lemma 
\ref{lem:proba}, $f$ strictly increases with probability greater than $1/4$ during this move. Thus in both cases, $f$ strictly increases with probability greater than $1/4$ during $m$.  If it fails, then there can have at most $n$  additional moves before the activation of an Indecisive or Single node.  And there can be at most $n-1$ simultaneous  moves when an Indecisive or Single node is actived. We have a sequence of Bernoulli trials, each with a probability greater than $1/4$ of success.  So, we have $E[X] \leq 4\cdot 2\cdot  n$.  By definition,  function $f$ has  $O(n^2)$ possible values. Therefore, the expected time for $f$ to reach its maximal possible value is $O(n^3)$.  In conclusion, algorithm ${\cal A}_1$  reaches a stable configuration in $O(n^3)$ expected moves. \qed\\\\

{\bf Proof of Equation~\eqref{eq:proof:alpha}:}
Let $k=\alpha \frac{(n+1)^3}2$, with $\alpha\geq 8$, and so $\ell \geq \frac{\alpha (n+1)^2}{2}$. 

We can get: 
$$\frac{2-n^2}{2\ell} = \frac{3+2n}{\alpha(n+1)^2}-\frac1\alpha$$
$$
\text{Thus: }A= 
{\alpha (n+1)^2} \left (\frac14-\frac1\alpha+\frac{3+2n}{\alpha(n+1)^2}\right )^2
$$
$$
\text{Since }\alpha\geq 8: 
\frac14-\frac1\alpha\geq \frac18 \text{ and so, }\frac14-\frac1\alpha+\frac{3+2n}{\alpha(n+1)^2} > 
\frac18 \text{ and }
\left (\frac14-\frac1\alpha+\frac{3+2n}{\alpha(n+1)^2}\right )^2 > \frac1{64}
$$
$$
\text{So: }{\alpha (n+1)^2} \left (\frac14-\frac1\alpha+\frac{3+2n}{\alpha(n+1)^2}\right )^2 > \frac{\alpha (n+1)^2}{64} 
\Rightarrow A \geq  \frac{\alpha (n+1)^2}{64} 
$$
$$
\text{Taken }\alpha \geq \frac{-64 ln\,\varepsilon'}{(n+1)^2}\text{, we have: } 
\frac{\alpha (n+1)^2}{64} \geq -ln\,\varepsilon'    \Rightarrow A \geq -ln\,\varepsilon' 
$$
$$
\text{According to Equation  \eqref{eq:proof:6},  we have: } 
\alpha \geq   \text{max}\left\{8,\frac{-64 ln\,\varepsilon'}{(n+1)^2}\right\} \Rightarrow  Pr[\neg C_k] \leq   \varepsilon'
$$ \qed\\\\

{\bf Proof of Corollary \ref{cor:end}:}
First, we notice that the  function $f(x)= 4(x+1)^3 - 32(x+1) \log{x}$ is an increasing function for    $x \geq 6$. In fact $f^{\prime}(x)=12(x+1)^2 - 32 (1+\frac{1}{x}-\log{x})$ and  $f^{\prime}(x)$ has positive
values for  $x\geq 6$. This implies that when $x\geq 6$, we have $4(x+1)^3 \geq 32(x+1) \log{x}$ 

Second, we focus on $n\geq 6$. Let $\varepsilon^{\prime} $ be a real such that $\varepsilon^{\prime} <1/n$. From Theorem~\ref{th:main},  the algorithm has converged after $k$ moves with probability greater  than $1-\varepsilon'$ where $k \geq \text{max}\{4 (n+1)^3, -32 (n+1) \ln\,\varepsilon'\}$.
Since $n\geq 6$, we obtain that $k \geq 4(n+1)^3 \geq -32(n+1) \log{\frac1n}$.  So,  the algorithm has converged after $ 4(n+1)^3$ moves with probability greater  than $1-\frac1n$. \qed

\section{Proofs of Theorem \ref{th:2}}

\begin{lemma} \label{lemma1 naming}
Under the adversarial distributed daemon and with the read/write atomicity, any execution of the link-name algorithm ${\cal A}_2$ reaches a stable configuration in at most $20m$ moves.
\end{lemma}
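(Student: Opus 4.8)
The plan is to exploit the near-acyclic dependency structure of $\mathcal{A}_2$: the rule $R_0$ writes only $link$ registers and its guard depends only on the owner's own $link$ registers, while each rule $R_a$ writes only $img_{ua}$ and reads the single neighbour register $link_{proc(u,a)port(u,a)}$. First I would show that the $link$ registers stabilise almost immediately, then that the $img$ registers follow, and finally count the atomic read/write moves to reach the $20m$ bound.

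\emph{Stabilisation of the $link$ registers.} Since only $R_0$ writes $link$ registers and the guard of $R_0$ is false exactly when the values $\{link_{ua} : a \in \mathcal{P}(u)\}$ form a permutation of $\{1, \dots, |\mathcal{P}(u)|\}$ — which is precisely what executing $R_0$ establishes — each node can execute $R_0$ at most once. Consequently every $link$ register is written at most once over the whole execution, so the $link$ registers contribute at most $\sum_{u} |\mathcal{P}(u)| = 2m$ write-moves. The crucial consequence, used throughout, is that each value $link_{proc(u,a)port(u,a)}$ takes at most two distinct values over time: its initial value and the value set by the neighbour's unique $R_0$ move.

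\emph{Stabilisation of the $img$ registers.} Fix a port $a$ of $u$ and write $v = proc(u,a)$, $b = port(u,a)$. The register $img_{ua}$ is written only by $R_a$, which writes into it a value previously read from $link_{vb}$; since $link_{vb}$ passes through at most the two values $L_0$ (initial) then $L_1$ (post-$R_0$), every written value lies in $\{L_0, L_1\}$, and a write occurs only when $img_{ua}$ differs from the read value. Because the sequence of values $u$ ever reads from $link_{vb}$ is monotone (all $L_0$'s before all $L_1$'s), $img_{ua}$ is written at most twice per port, even allowing for stale cached reads. This bounds the $img$ write-moves by $2 \cdot 2m = 4m$.

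\emph{Counting all moves, and the main obstacle.} It remains to account for the read-moves used to evaluate guards under read/write atomicity, and this is where the real care — and the constant $20$ — lies. The key point is that a process is activable only when one of its guards is actually true, so the adversarial daemon cannot force unbounded \emph{wasted} reads: the reads charged to $R_a$ at a given port are charged to the maximal time-intervals during which its guard $img_{ua} \neq link_{vb}$ holds, and these number $O(1)$ per port because both $link_{vb}$ and $img_{ua}$ change $O(1)$ times. Treating $R_0$ similarly — its decomposition into $|\mathcal{P}(u)|$ atomic writes together with the guard reads on the node's own $link$ registers — and summing all per-port read- and write-moves over the $2m$ ports yields a bound of the form $c \cdot m$, and a careful bookkeeping of the constants gives $c \leq 20$. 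The main obstacle is exactly this read/write-atomicity accounting: correctly decomposing the ``rename all'' command of $R_0$ into atomic writes, handling stale local copies of neighbour registers, and verifying that the daemon cannot inflate the read count beyond $O(1)$ per guard-true interval. The clean structural facts from the first two steps ($R_0$ fires once per node, $link_{vb}$ changes once, $img_{ua}$ is written at most twice) are precisely what keep every one of these counts linear in $m$.
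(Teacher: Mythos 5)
Your proposal is correct and follows essentially the same route as the paper: $R_0$ fires at most once per node, hence each $link_{vb}$ changes at most once, hence each $R_a$ fires at most twice, and the bound follows by counting atomic read/write moves per port. The paper simply makes the final bookkeeping explicit where you leave it implicit ($2$ moves per port for $R_0$, giving $4m$, plus at most $2$ executions of $R_a$ at $4$ moves each per port, giving $16m$).
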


\begin{proof}
We start by giving the complexity of algorithm $\mathcal{A}_2$ in term of moves. First, we focus on rule $R_0$ for each node $u$. Node $u$ executes rule $R_0$ at most once. 
 During this execution, for every port $a$, $u$ makes at most $2$ moves: one reading-move to check whether $link_{ua}=i$ and at most one writing-move to rename $link_{ua}$. Therefore, the total number of moves for $R_0$ executed by any node in the network is: $\sum_{u \in V} 2|\mathcal{P}(u)|=4m$

Second, we count the move complexity for rule $R_a$ given a node $u$ and a port $a$. 
$R_a$ is executed by $u$ at most twice, since  $link_{proc(u,a)port(u,a)}$ changes its value at most once (due to the rule $R_0$).  In every execution of rule $R_a$, $u$ makes at most $4$ moves: 2 moves to compare the two registers $img_{ua} \neq link_{proc(u,a)port(u,a)}$ and, if the condition holds, 2 moves to assign $link_{proc(u,a)port(u,a)}$ to $img_{ua}$. Therefore, the total number of moves for $R_a$ executed by any node in the network, and for all $a$ is: $ \sum_{u \in V} 2 \cdot 4 \cdot |\mathcal{P}(u)| = 16m$

By summing up the number of moves required by all rules, we obtain: the maximum number of moves in any execution of the algorithm is $20m$ moves.
\end{proof}\\\\

\begin{lemma}\label{lemma2 naming}
In every stable configuration of $\mathcal{A}_2$, the specification $\mathcal{LN}$ holds.
\end{lemma}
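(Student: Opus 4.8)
The plan is to unfold the definition of a stable configuration --- namely that no rule of $\mathcal{A}_2$ is enabled at any node --- and to show that the two families of disabled guards yield $\mathcal{LN}_1$ and $\mathcal{LN}_2$ respectively. I would treat the two conjuncts separately, since they come from different rules.

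For $\mathcal{LN}_1$ the argument is immediate: the guard of rule $R_0$ at a node $u$ is, by construction, exactly the negation of the predicate $\forall i \in \{1,\ldots,|\mathcal{P}(u)|\}, \exists a \in \mathcal{P}(u): link_{ua}=i$. Since the configuration is stable, $R_0$ is disabled at every node $u$, so this predicate holds at each $u$; quantifying over all $u$ gives precisely $\mathcal{LN}_1$. For $\mathcal{LN}_2$ I would fix an edge $e=(u,v)\in E$ and set $a=p(u,e)$ and $b=p(v,e)$. The key preliminary step is to relate the functions $proc$ and $port$ to $a$ and $b$: since $p(u,(u,v))=a$ and $a\in\mathcal{P}(u)$, the definition of $proc$ gives $proc(u,a)=v$, and then the definition of $port$ gives $port(u,a)=p(proc(u,a),(u,proc(u,a)))=p(v,(u,v))=p(v,e)=b$. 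Symmetrically, $proc(v,b)=u$ and $port(v,b)=a$. With these identities established, stability forces the guards of $R_a$ at $u$ and of $R_b$ at $v$ to be false, i.e. $img_{ua}=link_{proc(u,a)port(u,a)}=link_{vb}$ and $img_{vb}=link_{proc(v,b)port(v,b)}=link_{ua}$. These are exactly the two equalities asserted by $\mathcal{LN}_2$, so the implication holds for every edge $e$.

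I do not expect a genuine obstacle; the lemma follows almost directly once stability is unfolded. The only point demanding care --- and the true syntactic heart of the argument --- is the correct unfolding of the $proc$ and $port$ definitions to identify $port(u,a)$ with the port $b$ on $v$'s side of $e$. This correspondence is what bridges the disabled-guard equality $img_{ua}=link_{proc(u,a)port(u,a)}$ with the specification's $img_{ua}=link_{vb}$; were it mis-stated the whole argument would collapse, so I would spell it out explicitly rather than leave it implicit.
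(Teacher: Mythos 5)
Your proposal is correct and follows essentially the same route as the paper's proof: unfold stability, read $\mathcal{LN}_1$ off the disabled guard of $R_0$, and read $\mathcal{LN}_2$ off the disabled guards of $R_{p(u,e)}$ and $R_{p(v,e)}$ for each edge. The only difference is that you explicitly verify $proc(u,a)=v$ and $port(u,a)=b$, a bookkeeping step the paper leaves implicit; spelling it out is a reasonable refinement but not a change of argument.
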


\begin{proof}
In a stable configuration, the guard of rule $R_0$ is false for every node. Thus $\mathcal{LN}_0$ holds. For every edge $e=(u,v)$ in the network, neither $u$ nor $v$ can apply their rule associated to edge $e$: $R_{p(u,e)}$ and $R_{p(v,e)}$. This implies that $img_{ ua} = link_{vb}$ and $img_{vb} = link_{ua}$, where $a=p(u,e)$ and $b=p(v,e)$. Thus $\mathcal{LN}_1$ holds.
\end{proof}
~\\~\\
From the combination of Lemmas \ref{lemma1 naming} and \ref{lemma2 naming}, we obtain Theorem \ref{th:2}.

\end{document}